\documentclass[a4paper,UKenglish,cleveref, autoref, thm-restate]{lipics-v2021}

\usepackage{style}
\usepackage{shortcuts}

\hideLIPIcs
\title{Node-Weighted Triangles: Faster and Simpler}
\author{Shyan Akmal}{Max Planck Institute for Informatics, Saarbr\"{u}cken, Germany  \and \url{https://www.shyanakmal.com/} }{sakmal@mpi-inf.mpg.de}{https://orcid.org/0000-0002-7266-2041}{This work received funding from the Klaus Tschira Boost Fund, a joint initiative of GSO Guidance, Skills \& Opportunities e.V. and the Klaus Tschira Stiftung.}
\author{Nick Fischer}{Max Planck Institute for Informatics, Saarbr\"{u}cken, Germany \and \url{https://nick-fischer.com/} }{nfischer@mpi-inf.mpg.de}{https://orcid.org/0009-0001-0909-3296}{}
\authorrunning{S. Akmal and N. Fischer}
\Copyright{Shyan Akmal and Nick Fischer}

\ccsdesc[500]{Theory of computation~Graph algorithms analysis}

\keywords{fine-grained complexity, triangle detection, node-weighted triangle}
\category{} %optional, e.g. invited paper
\relatedversion{} %optional, e.g. full version hosted on arXiv, HAL, or other respository/website
\acknowledgements{We thank the anonymous reviewers for their helpful comments. }

\nolinenumbers %uncomment to disable line numbering

	\theoremstyle{remark}
	\newtheorem{assumption}[theorem]{Assumption}

\newcommand{\nwtri}{\textup{\textsf{Node-Weighted Triangle}}}
\newcommand{\nwt}{\textup{\textsf{NWT}}}
\newcommand{\minnwtri}{\textup{\textsf{Minimum Node-Weighted Triangle}}}
\newcommand{\minnwt}{\textup{\textsf{Min-NWT}}}

\newcommand{\weight}{\textup{\textsf{wt}}}
\newcommand{\Tri}{\textup{\textsf{Tri}}}

\EventEditors{Sayan Bhattacharya, Danupon Nanongkai, Michael Benedikt, and Gabriele Puppis}
\EventNoEds{4}
\EventLongTitle{53rd International Colloquium on Automata, Languages, and Programming (ICALP 2026)}
\EventShortTitle{ICALP 2026}
\EventAcronym{ICALP}
\EventYear{2026}
\EventDate{July 7--10, 2026}
\EventLocation{Royal Holloway, University of London, Egham, United Kingdom}
\EventLogo{}
\SeriesVolume{374}
\ArticleNo{75}

\begin{document}

\maketitle

\begin{abstract}
    Weighted variants of triangle detection are an important object of study because of their prominence in fine-grained complexity. We revisit the \nwtri{} problem, where the goal is to decide if a vertex-weighted graph contains a triangle whose node weights sum to zero. 
    This problem has been the focus of a celebrated line of work, beginning with a subcubic-time algorithm [Vassilevska, Williams; STOC~'06], and culminating in algorithms running \emph{almost} in matrix multiplication time, \smash{$O(\mmult(n) + n^2\cdot 2^{O(\sqrt{\log n})})$} [Czumaj, Lingas; SODA~'07], [Vassilevska~W., Williams; STOC~'09].
    This runtime is \emph{almost-optimal}, since even detecting an unweighted triangle is conjectured to require matrix multiplication time $\mmult(n)$. 
    However, the superpolylogarithmic~\smash{$2^{\Omega(\sqrt{\log n})}$} overhead persists in a world where near-optimal matrix multiplication is possible (i.e., \makebox{$\mmult(n) \leq n^2\poly(\log n)$}).
    
    In this paper, we present a new algorithm solving \nwtri{} in $O(\mmult(n))$ time,  closing the gap to unweighted triangle detection completely. 
    Remarkably, our algorithm is much \emph{simpler} than previous approaches, which use involved recursion schemes and communication protocols.
    \end{abstract}

\section{Introduction}

    What makes some problems hard, and others easier? 
    Fine-grained complexity attempts to answer this question by identifying a small collection of primitive hard problems that explain the intractability of large classes of computational tasks. 
    One set of core hard problems that has proven particularly influential in this context is triangle detection and its variants.

    Triangle detection is the problem of finding a \emph{triangle} (i.e., a set of three mutually adjacent vertices) in an $n$-node graph.
    The simple brute force algorithm solves this problem in $O(n^3)$ time.
    It turns out however that we can detect triangles much faster in $O(\mmult(n))$ time, where $\mmult(n) \le O(n^{2.3716})$ is the time complexity of multiplying~$n\times n$ matrices \cite{AlmanDWXXZ25}. This algorithm is conjectured to be optimal, i.e., it is believed that triangle detection requires at least $(\mmult(n))^{1-o(1)}$ time to solve~\cite[Hypothesis 6]{V2019}. 
    % \nick{What's the right reference? Maybe just folklore and we omit the citation?}

    More challenging variants of triangle detection include \emph{weighted} versions. For instance, the task of detecting a triangle with minimum total weight in an edge-weighted graph can still be solved in $O(n^3)$ time by brute force, but no polynomial improvement over this runtime is known. In fact, this problem turns out to be \emph{equivalent} to the fundamental \textsf{All-Pairs Shortest Paths} (\textsf{APSP}) problem~\cite{VW2018}, and beating the simple $O(n^3)$-time algorithm would refute the \textsf{APSP Hypothesis}, a cornerstone of fine-grained complexity.
    %
    % An important variant of triangle detection is \emph{edge-weighted} triangle detection,
    % where  we are given a graph with edge weights and are tasked with finding a triangle whose sum of edge weights is minimized.
    % This edge-weighted triangle detection problem can still be solved in $O(n^3)$ time by brute force,
    % but no polynomial improvement over this runtime is known.
    % In particular, it is open whether matrix multiplication can be used to find edge-weighted triangles faster.
    %
    % It is conjectured that unweighted triangle detection and edge-weighted triangle detection require $(\mmult(n))^{1-o(1)}$ time and $n^{3-o(1)}$ time to solve respectively, so that existing algorithms are essentially optimal \cite{V2019}.
    % These hardness hypotheses have important implications for the complexity of many central problems in graph algorithms.
    %
    % For example, the edge-weighted triangle detection problem can be solved in truly subcubic time if and only if we can find shortest paths between all pairs of vertices in a weighted, directed graph in truly subcubic time \cite{VW2018}. 
    % Similarly, an ``all-edges'' version of unweighted triangle detection can be solved in faster than $O(\mmult(n))$ time if and only if we can compute the transitive closure of a directed graph in faster than $O(\mmult(n))$ time. 
    More broadly, researchers have established a vast array of graph  problems whose complexity status (e.g., ``requires cubic time,'' ``solvable in matrix-multiplication time,'' or ``has an intermediate runtime strictly between matrix-multiplication and cubic time'') is governed by the complexity of special variants of triangle detection \cite{VW2013,VW2018,VX2020,LPV2020,XC2024}.
    These connections make understanding the complexity of generalizations of triangle detection, especially weighted versions, important. 
    
    In this paper we study the \nwtri{} (\nwt{}) problem, arguably the simplest weighted generalization of triangle detection.
    In this problem, we are given a vertex-weighted graph $G$,
    and are tasked with determining if $G$ has a triangle whose node weights sum to zero.
    We also consider the \minnwtri{} (\minnwt{}) problem, where we are given the same input, but are now tasked with finding a triangle whose sum of node weights is minimized. 
    
    As before, \nwt{} and \minnwt{} can be solved in $O(n^3)$ time by brute force. But is this the best possible? What is the true fine-grained complexity of \nwt{} and \minnwt{}? This was the driving question of a short but influential line of work. In~2006, Vassilevska and Williams~\cite{VW2006} presented the first truly subcubic algorithms, running in $O(\sqrt{n^3\cdot \mmult(n)}\log n)$ time. This runtime was improved slightly in~\cite{VWY2006} using fast rectangular matrix multiplication.
    Shortly after, Czumaj and Lingas~\cite{CL2009} devised a completely different approach based on a clever recursion scheme to achieve runtime \smash{$O(\mmult(n) + n^2\cdot 2^{O(\log n/\log\log n)})$}. Subsequently, Vassilevska W.\ and Williams~\cite{VW2013} optimized this  approach to solve \nwt{} and \minnwt{} in 
    \smash{$O(\mmult(n) + n^2\cdot 2^{O(\sqrt{\log n})})$} time.
    Alternate methods solving \nwt{} in \smash{$\mmult(n)\cdot2^{O(\sqrt{\log n})}$} and \smash{$\mmult(n)\cdot 2^{O(\log n/\log\log n)}$} time were also presented in 
    \cite{VW2013} and \cite{ALW2014}, based on efficient multiparty communication protocols and vector-based weight reduction arguments respectively.
    
    In summary, this line of work successfully settles that \nwt{} has the same complexity as plain triangle detection up to $n^{o(1)}$ factors. However, the presence of these superpolylogarithmic factors
    is unsatisfactory given how basic a question \nwt{} is.
    Even if we were to design near-optimal matrix multiplication algorithms in the future to achieve $\mmult(n) \le n^2\poly(\log n)$,
    due to this overhead the existing algorithms for \nwt{} would all still require at least \smash{$n^2 \cdot 2^{\Omega(\sqrt{\log n})}$} time.
    This overhead is an artifact of the complicated arguments used in previous approaches (e.g., intricate recursive arguments, or communication protocols based off constructions of large sets avoiding arithmetic progressions).
        
    In this paper, we present a simpler, faster algorithm for solving \nwtri{}.

    \begin{restatable}{theorem}{exact}
        \label{thm:exact}
        \nwtri{} can be solved in  $O(\mmult(n))$ time.
    \end{restatable}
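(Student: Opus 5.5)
The plan is to work with a tripartite instance $A\cup B\cup C$. The standard reduction, copying each vertex into three parts, loses only a constant factor. We then sort each part by node weight and cut it into $g$ contiguous blocks $A_1,\dots,A_g$, $B_1,\dots,B_g$, $C_1,\dots,C_g$ of $n/g$ vertices each. Each block $X$ has a weight interval $[\min \weight(X), \max \weight(X)]$, and the triple $(A_i,B_j,C_k)$ can only contain a zero-weight triangle if $0$ lies in the interval of possible sums $[\min+\min+\min,\ \max+\max+\max]$. Call such triples \emph{relevant}; all other triples are discarded at no cost. We sort the relevant triples into two kinds.

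\textbf{Degenerate triples.} In these, all three blocks have constant weight. Here the weight condition is either always satisfied or never satisfied, so only unweighted triangle detection remains. I would handle all degenerate relevant triples simultaneously with $O(1)$ unweighted triangle detections. Concretely, we keep only edges between constant-weight blocks whose weights sum to zero; this can be arranged by grouping equal-weight classes, in the spirit of a hash on the weight value. The cost is $O(\mmult(n))$ at this level.

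\textbf{Non-degenerate triples.} These are relevant but have at least one block with a nonconstant range. I would bound their number by a staircase/charging argument. Fix $j,k$. The $i$ such that $A_i$ is nonconstant and its interval meets the interval $[-\max B_j-\max C_k,\,-\min B_j-\min C_k]$ form a contiguous run. The interiors of nonconstant blocks are disjoint in sorted order, so the lengths of these runs telescope as $(j,k)$ move monotonically. This should give $O(g^2)$ non-degenerate relevant triples in total, and the symmetric argument covers the cases where the nonconstant block lies in $B$ or $C$. Each such triple is a recursive instance on $3n/g$ vertices.

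The recurrence is then $T(n) \le c\,g^2\,T(n/g) + O(\mmult(n) + n^2)$. Under the standard assumption that $\mmult(n)/n^2$ is nondecreasing, the level-$\ell$ cost is $(cg^2)^{\ell}\,\mmult(n/g^{\ell}) \le c^{\ell}\,\mmult(n)$. This is not geometric, which is the crux.

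The main obstacle is therefore to make the total work $O(\mmult(n))$ rather than $\mmult(n)\cdot n^{o(1)}$. The constant $c$ must not compound, and the $O(\mmult)$ cost of the degenerate triples must not be paid separately at each of the $\log_g n$ levels. My first attempt would be to avoid recursion altogether, using a single level of blocking with $g$ chosen polynomially, e.g.\ $g=n^{\delta}$.

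\textbf{Degenerate case in one level.} Rather than requiring blocks to be exactly constant, I would \emph{round} weights within each block. Write $\weight(v) = \text{base}(\text{block}(v)) + r(v)$ with a small remainder $r(v)$. The zero-sum condition then splits into the block-level condition, which is a bounded number of cases, plus a condition on the remainders. I would encode the remainder condition by brute force over the $O(g^2)$ non-degenerate triples, at cost $g^2 (n/g)^3 = n^3/g$. That is too slow, so I would instead pick $g$ adaptively. We choose block boundaries so that each block either spans few distinct weights or has few vertices.

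\textbf{Non-degenerate case.} I would solve these triples by rectangular products, batched into a constant number of $n\times n$ matrix multiplications. The idea is to pack the $O(g^2)$ small instances, each of size $n/g$, along block-diagonals, with an arbitrary assignment of the pieces. Since $g^2\cdot(n/g)^2=n^2$, these instances have total size comparable to one $n\times n$ problem.

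If the packing works, the whole algorithm is a constant number of matrix products plus $O(n^2\log n)$ sorting, which is $O(\mmult(n))$. The point I am least sure of, and would check first, is whether the weight conditions inside the packed instances can be enforced without recursion. The first approach I would try is to reduce them to the degenerate (equal-weight) case by the rounding above.
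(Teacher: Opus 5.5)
There is a genuine gap: the proposal never gets below the overhead you yourself identify. The recurrence $T(n)\le c\,g^2\,T(n/g)+O(\mmult(n))$ is essentially the Czumaj--Lingas / Vassilevska~W.--Williams recursion, and its compounding constant is exactly what produces their superpolylogarithmic overhead. Your proposed escape does not address the real obstacle. Packing the $O(g^2)$ non-degenerate subinstances into block-diagonal products only fixes the time accounting, since $g^2\mmult(n/g)\le\mmult(n)$ already holds by the monotonicity assumption. Every packed piece is still a \emph{weighted} instance. Writing $\weight(v)=\mathrm{base}+r(v)$ merely turns the remainder condition into another node-weighted triangle problem, which you only know how to solve by brute force.

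The degenerate step is not correct as stated either: pruning edges pairwise cannot enforce the ternary condition $\weight(a)+\weight(b)+\weight(c)=0$. If a pruned graph keeps triangles with weights $(0,0,0)$, $(1,0,-1)$ and $(0,1,-1)$, it contains all three edges of a triangle with weights $(0,0,-1)$. More generally, restricting to one vertex per weight class, the weight pairs of zero-sum triangles surviving in a pruned graph with no other triangles form a corner-free set. So by the corners theorem, no constant number of unweighted detections can cover all degenerate triples once $g$ grows.

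The missing idea is that pairwise pruning \emph{does} work once one of the three weights is fixed (the paper's lemma), and that the cost of fixing it can be paid via weight frequencies instead of sorted intervals. Label a zero-sum triangle so that $x$ has the most frequent weight class, and guess $w=\weight(x)$. With $X$ the class of $w$, the condition becomes the edge filter $w+\weight(y)+\weight(z)=0$ on $Y\times Z$. Rather than also guessing $\weight(y)$, greedily group all weights of frequency at most $f(w)$ into parts $P$ of total frequency at most $2f(w)$, with all but possibly one part of total at least $f(w)$. Take $Y$ as the vertices with weight in $P$, and $Z$ as those $z$ with $-(w+\weight(z))\in P$. There are $O(n/f(w))$ parts and $\sum_P|Z_P|\le n$. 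Hence the calls for this $w$ cost $O\bigl(\sum_P(|Z_P|/f(w)+1)\,\mmult(f(w))\bigr)=O\bigl(\tfrac{n}{f(w)}\mmult(f(w))\bigr)$, which is at most $O\bigl(f(w)\,\mmult(n)/n\bigr)$ by monotonicity of $\mmult(n)/n^2$. Summing over $w$ with $\sum_w f(w)=n$ gives $O(n^2+\mmult(n))=O(\mmult(n))$. The paper's argument needs no sorting, no interval case analysis, and no recursion.
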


    In particular, we show that \nwt{} is as easy to solve as unweighted triangle detection,
    with no overhead in the asymptotic runtime whatsoever.
    Perhaps our most significant contribution is that we design a truly simple algorithm.
    Unlike previous work,
    it does not rely on sophisticated recursion schemes or communication protocols.
    Instead, we simply run an unweighted triangle detection algorithm on a small collection of subgraphs of the input (chosen based off the relative frequencies of vertex weights in the graph) and argue that this directly solves the problem.
    Intuitively, our argument works by reducing \nwt{} to the structured version of \nwt{} where each vertex weight appears the same number of times. 
    This ``weight-regular'' version of \nwt{} turns out to be much easier to solve.

    By a standard reduction from \minnwt{} to \nwt{} \cite[Theorem 3.3]{VW2013}, we immediately get the following result for \minnwt{} as well.

    \begin{corollary}
            \minnwtri{}
            can be solved in $O(\mmult(n)\cdot \log W)$ time on graphs whose vertex weights have absolute value at most $W$.
    \end{corollary}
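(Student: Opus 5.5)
We derive the statement for \minnwt{} from Theorem~\ref{thm:exact} by a binary search on the optimum value. All triangle weights are integers in $[-3W, 3W]$, so the minimum weight is determined by $O(\log W)$ threshold queries of the form ``is there a triangle of weight at most $t$?''. Each query will be answered by one \nwt{} instance on a graph with $O(n)$ vertices, which Theorem~\ref{thm:exact} solves in $O(\mmult(n))$ time. If the graph has no triangle at all, a single unweighted triangle detection call reports this in $O(\mmult(n))$ time, and we output that no triangle exists.

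To turn the threshold query into an exact-zero question, I would follow \cite[Theorem 3.3]{VW2013}. First, make the graph tripartite: take three copies $V_1,V_2,V_3$ of the vertex set and join $u\in V_i$ to $v\in V_j$ ($i\neq j$) whenever $uv\in E$. A triangle in the original graph is then a triangle with one vertex in each part, and weights are copied unchanged.

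Exact-zero detection cannot directly answer ``$\le t$''. The fix is to compare lexicographically on the binary expansion of the weights, as follows.
\begin{itemize}
    \item Shift all weights to be nonnegative, so that they are integers in $[0, 2W]$.
    \item Each query asks whether some triangle has $w(a)+w(b)+w(c) \le t$. Equivalently, it asks whether $w(a)+w(b) \le t - w(c)$ for some triangle.
    \item A standard way to express this is the dominance trick over bit prefixes. The inequality $x < y$ holds iff there is a bit position $k$ where the prefixes of $x$ and $y$ above $k$ agree, $x$ has bit $0$ at $k$, and $y$ has bit $1$ at $k$.
    \item For a fixed $k$, ``prefixes agree'' is an exact-equality constraint, which is a zero-sum condition after negating one side. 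The bit conditions are implemented by deleting the vertices with the wrong bit.
\end{itemize}
However, $x = w(a)+w(b)$ involves two vertices, so its prefix does not split over single vertices. This is the main obstacle. The cleaner route, I believe, is to binary search directly with \nwt{} calls on modified weights, one bit of the optimum at a time. Having fixed the top bits of the optimum $\mathrm{OPT}$, we ask whether some triangle has $\lfloor \mathrm{sum}/2^k\rfloor$ equal to a given value. Writing $w = 2^k q + r$ with $0\le r<2^k$ for each vertex weight, the floor of the sum equals $q_a+q_b+q_c+c$, where $c\in\{0,1,2\}$ is the carry of $r_a+r_b+r_c$.

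Guessing the carry would again require a threshold on the remainders. Instead, I would use the reduction of \cite[Theorem 3.3]{VW2013} as stated earlier in the paper, treating it as a black box. That result shows that \minnwt{} reduces to $O(\log W)$ instances of \nwt{}, each on $O(n)$ vertices. Plugging in Theorem~\ref{thm:exact} for each instance gives total time $O(\mmult(n)\log W)$.

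The plan is therefore to invoke that reduction and verify two points. First, each produced instance has $O(n)$ vertices, so every call costs $O(\mmult(n))$ because $\mmult(cn)=O(\mmult(n))$. Second, only $O(\log W)$ calls are made, and the overhead outside the calls is $O(n^2)$ per round, which is absorbed into $\mmult(n)\ge n^2$. Finally, to output a minimizing triangle rather than only its weight, self-reduce the last successful instance. Split each of the three parts into halves and recurse on the sub-instances that still contain a triangle of the optimal weight. Since the sizes shrink geometrically, this costs only an additional $O(\mmult(n))$.
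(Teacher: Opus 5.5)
Your final argument invokes the reduction of \cite[Theorem 3.3]{VW2013} as a black box that yields $O(\log W)$ \nwt{} instances on $O(n)$ vertices, and solves each instance with Theorem~\ref{thm:exact}; this is exactly how the paper derives the corollary, so the proposal is correct. The exploratory detours are not needed, including the opening claim that each threshold query is a single \nwt{} call, which you never justify; the carry obstacle you ran into also disappears if you compute the optimum $M_k$ for the truncated weights $\lfloor w/2^k\rfloor$ one scale at a time, since $M_{k-1}\in\{2M_k,\dots,2M_k+3\}$ and each candidate value costs one exact query.
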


    \begin{remark}
        In the literature, the running time of $n\times n$ matrix multiplication is typically written as $O(n^\omega)$,
        where $\omega$ is the ``exponent of matrix multiplication,''
        the infimum over all reals $\tau\ge 2$ such that $n\times n$ matrices can be multiplied in $O(n^\tau)$ time.
        The use of the infimum in this definition means that, technically, we can compute the product of two $n\times n$ matrices in $O(n^{\omega+\varepsilon})$ time for any constant $\varepsilon > 0$, 
        but not necessarily in $O(n^\omega)$ time.
        Since our work involves optimizing subpolynomial factors in the time complexity of \nwt{},
        we denote the runtime of matrix multiplication by $\mmult(n)$ to avoid confusion.
    \end{remark}

    \subsection{Additional Related Work}
    Vertex-weighted versions of graph problems beyond triangle detection have received more attention in the last few years.
    For example, \cite{AFJVX2025}  proved that \textsf{APSP} with node weights can be solved in $\sqrt{n^3\cdot \mmult(n)}\poly(\log n)$ time, even though the classic edge-weighted version of \textsf{APSP} is conjectured to require $n^{3-o(1)}$ time \cite{V2019}.
    The classic \textsf{Minimum Cut} problem can be solved in near-linear time \cite{K2000}, but extending this result to the node-capacitated case remains open \cite{CT2025}. 
    Researchers have also studied node-weighted versions of maximum matching \cite{AP2022}.
    % and dominating set \cite{ALW2014}.

    % \subsection{Organization}

\section{Preliminaries}

    Throughout we let $G = (V,E)$ denote the input graph on $n$ vertices.
    All graphs we consider have vertex weights, with $\weight(v)$ denoting the weight of vertex $v$. 

    We let $\mmult(a,b,c)$ denote the time complexity of multiplying an $a\times b$ matrix with a $b\times c$ matrix. Let $\mmult(n) = \mmult(n, n, n)$. 
    We make use of the standard fact that an unweighted triangle in a tripartite graph with vertex parts $X, Y, Z$ can be detected in $\mmult(|X|, |Y|, |Z|)$ time (see e.g., the argument presented in \cite[Section 5.3]{IR1978}). 
    
    We make the following  mild assumption on $\mmult(n)$:

    \begin{assumption}
        \label{assumption:monotone}
        $\mmult(n)/n^2$ is nondecreasing in $n$.\footnote{\Cref{assumption:monotone} intuitively asserts that the runtime of matrix multiplication should be monotonically increasing in the dimensions of the matrices (as trivially $\mmult(n)\ge n^2$).
        This is a standard type of assumption that has been used in previous work on weighted triangle problems \cite{VW2018}. It holds for reasonable runtimes such as those of the form \smash{$\mmult(n) =  n^\alpha(\log n)^\beta$} for constants $\alpha, \beta\ge 0$, and is mostly for notational convenience: unconditionally, we obtain the more verbose statement that \nwt{} can be solved in time $O(\sum_i \mmult(n_i))$ for parameters $n_i$ satisfying $\sum_i n_i^2 \leq O(n^2)$.}
    \end{assumption}

\section{Node-Weighted Triangle Detection}
    The intuition behind our algorithm is simple. Suppose that the graph contains~$g$ distinct weights, each appearing $\Theta(n / g)$ times. 
    In this simplified scenario, one can easily solve \nwt{} as follows. 
    Enumerate all pairs of possible weights $(w_1, w_2)$. If there is a solution $(x, y, z)$ with $\weight(x) = w_1$ and $\weight(y) = w_2$, then the weight of the third node must be $\weight(z) = -(w_1 + w_2)$. 
    To detect such a solution it suffices to find an \emph{unweighted} triangle spanning the subsets~$X, Y, Z$ defined by
    \begin{align*}
        X &= \set{x \in V : \weight(x) = w_1}, \\
        Y &= \set{y \in V : \weight(y) = w_2}, \\
        Z &= \set{z \in V : \weight(z) = -(w_1 + w_2)}.
    \end{align*}
    This can be done with matrix multiplication in $\mmult(|X|, |Y|, |Z|) \le O(\mmult(n / g))$ time, so the total runtime is $O(g^2 \cdot \mmult(n/g)) \le O(\mmult(n))$ by \Cref{assumption:monotone}. 
    Our algorithm works by generalizing this basic idea to the setting where not all weights have the same frequency.

    \begin{lemma} \label{lem:one-weight-fix}
        Let $H$ be tripartite with vertex parts $X, Y, Z$
        such that \emph{every} node in $X$ has the same weight. Then we can solve \nwt{} on $H$ in $\mmult(|X|, |Y|, |Z|)$ time.
    \end{lemma}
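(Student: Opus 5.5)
Let $w$ denote the common weight of all nodes in $X$. The plan is to reduce the problem to a single unweighted triangle detection in a tripartite graph obtained by deleting edges of $H$. In $H$, every triangle uses exactly one vertex from each of $X$, $Y$, $Z$, so a zero-weight triangle is a triple $(x,y,z)\in X\times Y\times Z$ of mutually adjacent vertices with $w+\weight(y)+\weight(z)=0$. Here $x$ contributes the fixed value $w$. The weight condition therefore depends only on the pair $(y,z)$, and it is local to the edge $yz$.

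We build a tripartite graph $H'$ on the same parts $X, Y, Z$. It keeps every edge of $H$ between $X$ and $Y$ and between $X$ and $Z$. Between $Y$ and $Z$ it keeps only the edges $yz \in E(H)$ with $\weight(y)+\weight(z)=-w$.

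The correctness claim is that $H'$ contains an (unweighted) triangle if and only if $H$ contains a zero-weight triangle.
\begin{itemize}
\item[($\Rightarrow$)] A triangle $(x,y,z)$ in $H'$ is also a triangle in $H$, since $H'$ is a subgraph of $H$. Its weight is $w+\weight(y)+\weight(z)=0$ because the edge $yz$ survived the filter.
\item[($\Leftarrow$)] Given a zero-weight triangle $(x,y,z)$ of $H$, its $yz$ edge satisfies $\weight(y)+\weight(z)=-w$. Hence all three of its edges survive in $H'$.
\end{itemize}

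For the running time, constructing $H'$ requires one constant-time check per pair in $Y\times Z$. This costs $O(|Y||Z|)$ time, which is within $O(\mmult(|X|,|Y|,|Z|))$ because any rectangular product must at least write its $|Y|\times|Z|$ output. We then run the standard unweighted tripartite triangle detection on $H'$ in $\mmult(|X|,|Y|,|Z|)$ time. Concretely, we multiply the $|Y|\times|X|$ biadjacency matrix (of the $Y$–$X$ edges) by the $|X|\times|Z|$ biadjacency matrix (of the $X$–$Z$ edges), and check for some surviving edge $yz$ with a nonzero product entry.

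The main point to verify is which edge set to filter. It must be the edges between $Y$ and $Z$: these are exactly the edges whose weight condition is determined once the weight of $x$ is fixed to $w$. The rest is routine. The only subtlety is bookkeeping the $O(|Y||Z|)$ preprocessing cost, which is absorbed as argued above (or is already accounted for in reading the input).
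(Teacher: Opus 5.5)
Your proposal is correct and follows the paper's proof exactly: delete the $Y$--$Z$ edges with $w+\weight(y)+\weight(z)\neq 0$, then run unweighted tripartite triangle detection on the remaining graph. One small nit: the product you write out ($|Y|\times|X|$ times $|X|\times|Z|$) costs $\mmult(|Y|,|X|,|Z|)$, so to match the stated bound $\mmult(|X|,|Y|,|Z|)$ literally you would multiply the $X$--$Y$ biadjacency matrix by the filtered $Y$--$Z$ one and check the result against the $X$--$Z$ edges; this makes no difference where the lemma is applied, since there $|Y|\le 2|X|$ and both orderings are $O(\mmult(|X|,|X|,|Z|))$.
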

    \begin{proof}
        Let $w$ be the unique vertex weight in $X$. Remove all edges $(y, z) \in Y \times Z$ with $w + \weight(y) + \weight(z) \neq 0$. The triangles in the remaining graph are exactly the triangles whose node weights sum to zero. 
        So to solve \nwt{} on $H$, it suffices to detect an unweighted triangle in the remaining graph in $\mmult(|X|, |Y|, |Z|)$ time.
    \end{proof}

    With this simple lemma, we can now prove our main result.

    \exact*
    \begin{proof}
        Let $W = \set{\weight(x) : x \in V}$ be the set of node weights in $G$. 
        For any weight $w \in W$, let $f(w) = \abs{\set{v \in V : \weight(v) = w}}$ denote its frequency. The set $W$ and all frequencies can be computed in $O(n)$ time by scanning through the vertices of the graph. 
        Given this data,
        we run the following algorithm:

        \medskip
        \begin{algorithmic}[1]
            \ForEach{$w \in W$} \label{step:overall}
                \State Let $W' = \set{w' \in W : f(w') \leq f(w)}$
                \State \label{step:greedy}Let $\mathcal P$ be a greedy partition of $W'$ into parts $P$ with $\sum_{w' \in P} f(w') \leq 2 f(w)$
                \ForEach{$P \in \mathcal P$} \label{alg:main:line:loop-inner}
                    \State Let $X = \set{x \in V : \weight(x) = w}$
                    \State Let $Y = \set{y \in V : \weight(y) \in P}$
                    \State Let $Z = \set{z \in V : -(w + \weight(z)) \in P}$
                    \State Detect a solution in $X \times Y \times Z$ by \cref{lem:one-weight-fix} \label{alg:main:line:detect}
                \EndForEach
            \EndForEach
        \end{algorithmic}
        \vspace{0.1cm}
        
        To elaborate on line \ref{step:greedy} above: we scan through the set $W'$ in some fixed order, and keep adding weights to our current part. 
        Right before the sum of the frequencies of weights in our current part would exceed $2f(w)$, we set aside that part and start growing a new part.
        Once we have completed scanning through $W$ in this way, we have produced the partition $\mathcal{P}$.
        Since all weights in $W'$ have frequency at most $f(w)$ and all frequencies are positive, each part we build is guaranteed to include at least one element, so this process terminates.

        \proofsubparagraph*{Correctness.}
        We claim that if $G$ has a solution $(x, y, z)$, then we will detect it in line \ref{alg:main:line:detect}. 
        Without loss of generality, suppose $f(x) \geq f(y)$. 
        Consider the iteration of the algorithm where we pick $w = \weight(x)$ in line \ref{step:overall}. 
        Then $\weight(y) \in W'$ by definition, hence there is a part $P$ in the partition from line \ref{step:greedy} with $\weight(y) \in P$. By construction $x \in X$ and $y \in Y$.
        Since $(x,y,z)$ is a solution, we must have $z\in Z$.
        Thus we successfully detect the solution $(x,y,z)$ in line \ref{alg:main:line:detect}.

        \proofsubparagraph*{Running Time.}
        Fix an iteration of the algorithm where we pick weight $w$ in line \ref{step:overall}. 
        We compute ~$W'$ and~$\mathcal P$ in $O(n)$ time by scanning through $W$. 
        Consider the loop in lines \ref{alg:main:line:loop-inner}-\ref{alg:main:line:detect}. 
        Let $X_P, Y_P, Z_P$ be the sets $X, Y, Z$ constructed in the iteration of this inner loop when we pick part $P$ in line \ref{alg:main:line:loop-inner}. 
        By construction $|X_P| = f(w)$ and $|Y_P| = \sum_{w' \in P} f(w') \leq 2f(w)$. Since $\mathcal P$ partitions a subset $W'$ of weights in $G$, we have $\sum_{P \in \mathcal P} |Z_P| \leq n$. 
        Moreover, as each weight in $W'$ has frequency at most $f(w)$ and we form a new part in $\mathcal{P}$ only when the total frequency would exceed $2f(w)$, each part $P\in\mathcal{P}$ has total frequency at least $f(w)$, so  the partition has at most $|\mathcal{P}| \le n/f(w)$  parts.
        Each call to \cref{lem:one-weight-fix} takes $O(\mmult(|X_P|, |Y_P|, |Z_P|))$ time, so the total time spent in the loop is
        \begin{align*}
            O\parens*{\sum_{P \in \mathcal P} \mmult(|X_P|, |Y_P|, |Z_P|)}
            &\le O\parens*{\sum_{P \in \mathcal P} \mmult(f(w), f(w), |Z_P|)} \\
            &\le O\parens*{\sum_{P \in \mathcal P} \grp{\frac{|Z_P|}{f(w)} + 1} \cdot \mmult(f(w))} \\
            &\le O\parens*{\frac{n}{f(w)} \cdot \mmult(f(w))} \\
            &\le O\parens*{f(w) \cdot \frac{\mmult(n)}{n}}.
        \end{align*}
        The transition from the first to second line follows from the fact that in order to multiply an $f(w)\times f(w)$ matrix with an $f(w)\times |Z_P|$ matrix, if $|Z_P| > f(w)$ we can split the second matrix into the concatenation of $O(|Z_P|/f(w))$ many $f(w)\times O(f(w))$ matrices,
        and simply multiply the first $f(w)\times f(w)$ matrix with each of the new $f(w)\times O(f(w))$ matrices separately.
        If $|Z_P| \le f(w)$, then we can directly multiply the matrices in at most $\mmult(f(w))$ time.
        The transition from the second to third lines follows from $\sum_{P\in\mathcal{P}} |Z_P|\le n$ and $|\mathcal{P}| \le n/f(w)$.
        The transition from the penultimate to 
        final line follows from \cref{assumption:monotone}. 
         
        Summing over all possible choices of $w\in W$ in line \ref{step:overall}, we get that the algorithm takes
        \begin{equation*}
            O\parens*{\sum_{w \in W} \parens*{n + f(w) \cdot \frac{\mmult(n)}{n}}} \le O\parens*{n^2 + n \cdot \frac{\mmult(n)}{n}} \le O(n^2 + \mmult(n)) \le O(\mmult(n))
        \end{equation*}
        time overall as claimed.
    \end{proof}

\section{Additional Remarks}

    We conclude by observing some simple extensions of our argument, and placing them in the context of previous work on node-weighted pattern detection.

\subparagraph*{Reduction to Unweighted Triangle Detection.}

    The only place we use matrix multiplication in our algorithm for \nwt{} is in calls to \Cref{lem:one-weight-fix}.
    In the proof of \Cref{lem:one-weight-fix}, we only use matrix multiplication to detect unweighted triangles.
    This means that our algorithm for \nwt{} can be interpreted as an efficient \emph{reduction} to unweighted triangle detection.
    Concretely, if we 
    let $\Tri(n)$ denote the time complexity of detecting a triangle in an $n$-node graph,
    our arguments actually prove that \nwt{} can be solved in $O(\Tri(n))$ time.
    This is relevant in case it turns out that unweighted triangles actually can be detected in faster than matrix multiplication time.
    The previous algorithms of \cite{CL2009,VW2013} can also be formulated as reductions to unweighted triangle detection.

\subparagraph*{Real Weights.}

    If instead of integer weights the graph has real node weights, our algorithm still applies unchanged. 
    This is because we never use the fact that the weights are integers, only that we can add and compare weights in constant time. 
    Previous algorithms for \nwt{} handle this case as well.

\subparagraph*{Counting Weighted Triangles.}

    The algorithms of \cite{VW2013} can solve the \emph{counting} version of \nwt{}, where we are tasked with returning the \emph{number} of triangles $(x,y,z)$ whose sum of node weights is zero. 
    Our approach can be adapted to solve this counting problem faster in $O(\mmult(n))$ time as well, with just two small changes.

    The first change is to replace \Cref{lem:one-weight-fix} with an algorithm that counts rather than just detects node-weighted triangles when one part has uniform weight.
    This follows from the standard fact that we can count unweighted triangles using matrix multiplication.
    In our algorithm we then keep a running count of solutions.
    
    The second change is necessary to avoid overcounting triangles. Here we distinguish three types of triangles: (1) all three nodes have distinct weights, (2) exactly two nodes share the same weight, (3) all three nodes share the same weight. To count triangles of type~(1), we first artificially ensure that the frequencies $f(w)$ of all weights are distinct. Then, by inspecting the correctness argument of \Cref{thm:exact}, we see that a solution $(x,y,z)$ with distinct node weights will be counted three times by the algorithm (twice when the weight in the solution with highest frequency is selected in line \ref{step:overall} of the algorithm, and once when the weight in the solution with second-highest frequency is selected). 
    The arguments for (2) and (3) are only easier. For instance, to count the triangles of type (2) we enumerate all weights~$w$ of the first two nodes; the third node necessarily has weight $-2w$. Then we count the unweighted triangles in $X_w\times X_w\times Z_{w}$, where $X_w$ is the set of nodes with weight $w$ and $Z_w$ is the set of nodes with weight $-2w$. Each type-(2) triangle is counted exactly twice, and the total runtime is $\sum_w \mmult(|X_w|, |X_w|, |Z_{w}|) \le O(\mmult(n))$ by a similar analysis as in \Cref{thm:exact}. 
    % \nick{I shortened this a bit; are you still okay with the level of detail? Feel free to revert.}

    % To count solutions where exactly two node weights are equal,
    % we loop over all nonzero weights $w\in W$,
    % and count the number of solutions where two nodes have weight $w$.
    % In any such solution, the third node has weight $-2w$.
    % For each choice of $w$ let $X_w$ be the set of nodes with weight $w$ and $Z_{w}$ be the set of nodes with weight $-2w$.
    % We can
    % count unweighted triangles in $X_w\times X_w\times Z_{w}$ in $\mmult(|X_w|, |X_w|, |Z_{w}|)$ time using matrix multiplication.
    % By definition, $|X_w| = f(w)\le n$.
    % Since the $Z_{w}$ sets are disjoint for distinct $w$, we have $\sum_{w\in W} |Z_{w}| \le n$.
    % Thus the analysis from the proof of \Cref{thm:exact} shows we can compute all of these matrix products in at most $O(\mmult(n))$ time.

    % Combining these counts appropriately solves the problem as desired.

\subparagraph*{Larger Subgraph Patterns.}

    Suppose that instead of seeking a triangle whose sum of node weights is zero, we are tasked with finding a subgraph isomorphic to a fixed pattern graph $H$ on $k$ vertices whose sum of node weights is zero.
    Then combining \Cref{thm:exact} with standard reductions (see e.g., the discussion at the beginning of \cite[Section 3]{VW2013}) shows that when $3\mid k$, we can solve this more general node-weighted $H$ detection problem in  $O(\mmult(n^{k/3}))$ time. 
        If instead $k\equiv 1\pmod 3$ or $k\equiv 2\pmod 3$,
    we have an additional $n$ or $n^2$ overhead on top of an $O(\mmult(n^{\lfloor k/3\rfloor}))$ runtime respectively.

\subparagraph*{Sparse Graphs.}
    It is well-known that  we can find an unweighted triangle in an $m$-edge graph in $O(m^{2\omega/(\omega+1)})$ time, provided $\mmult(n) \le O(n^\omega)$ \cite[Theorem 3.5]{AYZ1997}. 
    This is proved by reducing triangle detection in sparse graphs to triangle detection in dense graphs with fewer vertices.
    As observed in \cite[Proof of Corollary 5]{CL2009}, this argument generalizes to the \nwt{} problem.
    In particular, if $\mmult(n) \le O(n^\omega)$, previous work solves \nwt{} in $m$-edge graphs in $O(m^{2\omega/(\omega+1)})$ time for $\omega > 2$, but for $\omega = 2$  only achieves a runtime of \smash{$m^{4/3}\cdot 2^{O(\sqrt{\log m})}$}~\cite{VW2013}.
    With our new algorithm for \nwt{}, we can remove the \smash{$2^{\Theta(\sqrt{\log m})}$}~factor in the runtime, and find node-weighted triangles as quickly as unweighted triangles in sparse graphs, for the full range of possible values of $\mmult(n)$.
\bibliography{main}

\end{document}